\newcommand*{\ADVERSARY}{\textsc{adversary}}
\newcommand*{\NN}{\mathbb{N}}
\newcommand*{\PAD}[2][\;]{#1#2#1}
\newcommand*{\SET}[1]{\{\,#1\,\}}
\newcommand*{\SETC}[2]{\SET{#1\mid#2}}
\newcommand*{\TREES}[1]{T_{#1}}
\newenvironment{mcqm@figure}[2]{%
    \@tempcnta=932340 
    \setlength{\unitlength}{#1\@tempcnta}%
    \sbox{\z@}{\includegraphics[scale=#1]{#2}}%
    \settodepth{\@tempcnta}{\usebox{\z@}}%
    \settoheight{\@tempcntb}{\usebox{\z@}}%
    \advance\@tempcntb\@tempcnta
    \settowidth{\@tempcnta}{\usebox{\z@}}%
    \divide\@tempcnta by\unitlength
    \divide\@tempcntb by\unitlength
    \begin{picture}(\@tempcnta,\@tempcntb)
        \put(0,0){\makebox(0,0)[bl]{\usebox{\z@}}}}{%
    \end{picture}}
\newenvironment{FIGURE}[1]{%
    \begin{mcqm@figure}{0.6}{#1}}{\end{mcqm@figure}}
\newcommand*{\LABEL}[4][2]{%
    \put(#2,#3){\makebox(#1,2){\footnotesize\vphantom{$()$}#4}}}
\begin{document}

\title{%
    Ogden's Lemma for Regular Tree Languages}

\author{%
    Marco Kuhlmann\thanks{I wish to thank Mathias M\"ohl and an anonymous reviewer for pointing out errors, and for comments that helped to improve the quality of the presentation. The work reported in this paper was partially funded by the German Research Foundation.}\\
    Dept.\ of Linguistics and Philology\\
    Uppsala University, Sweden}

\maketitle

\section{Introduction}%
\label{sec:Introduction}

Pumping lemmata are elementary tools for the analysis of formal languages. While they usually cannot be made strong enough to fully capture a class of languages, it is generally desirable to have as strong pumping lemmata as possible. However, this is counterbalanced by the experience that strong pumping lemmata may be hard to prove, or, worse, hard to use---this experience has been made, for example, in the study of the output languages of tree transducers, where other proof techniques, so-called \emph{bridge theorems}, make better tools \citep{engelfriet2002output}. The purpose of this squib is to strengthen the standard pumping lemma for the class of \emph{regular tree languages} \citep{gecseg1997tree}, without sacrificing its usability, in the same way as Ogden strengthened the pumping lemma for context-free string languages \citep{ogden1968helpful}.

The paper is structured as follows. Section~\ref{sec:Preliminaries} introduces our notation. Section~\ref{sec:Motivation} presents the main lemma and motivates it using a small, formal example. Finally, Section~\ref{sec:Proofs} contains the proof of the main lemma.

\section{Preliminaries}%
\label{sec:Preliminaries}

We assume the reader to be familiar with the standard concepts from the theory of tree languages. The notation that we use in this paper is mostly identical to the one used in the survey by \citet{gecseg1997tree}; the major difference is our way of denoting substitution into contexts.

We write~$\NN$ for the set of non-negative natural numbers, and $[n]$ as an abbreviation for the set $\SETC{i\in\NN}{1\leq i\leq n}$. Given a set~$A$, we write $|A|$ for the cardinality of~$A$, and $A^*$ for the set of all strings over~$A$.

Let~$\Sigma$ be a ranked alphabet. For a tree $t\in\TREES{\Sigma}$, we write $|t|$ to denote the \emph{size of~$t$}, defined as the number of nodes of~$t$. A \emph{path in~$t$} is a sequence of nodes of~$t$ in which each node but the first one is a child of the node preceding it. Let~${\circ}$ be a symbol with rank zero that does not occur in~$\Sigma$. Recall that a \emph{context over~$\Sigma$} is a tree~$c$ over $\Sigma\cup\SET{{\circ}}$ in which~${\circ}$ occurs exactly once. We call the (leaf) node at which the symbol~${\circ}$ occurs the \emph{hole} of the context. We write $|c|$ to denote the \emph{size of the context~$c$}, defined as the number of non-hole nodes of~$c$. Finally, given a context~$c$ and a tree~$t$, we write $c\cdot t$ for the tree obtained by substituting~$t$ into~$c$ at its hole. Note that Gécseg and Steinby denote this tree by $t\cdot c$ or $c(t)$.

A subset $L\subseteq\TREES{\Sigma}$ is a \emph{tree language over~$\Sigma$}. A tree language is \emph{regular}, if there is a finite-state tree automaton that accepts~$L$.

\section{Motivation}%
\label{sec:Motivation}

To motivate the need for a strong pumping lemma for regular tree languages, we start with a look at the standard one \citep{gecseg1997tree}\footnote{The lemma given here is in fact slightly stronger than the one given by \citet{gecseg1997tree} (Proposition 5.2), and makes pumpability dependent on the size of a tree, rather than on its height.}:

\begin{lemma}\label{lem:PumpingLemma}%
    For every regular tree language $L\subseteq\TREES{\Sigma}$, there is a number $p\geq 1$ such that any tree $t\in L$ of size at least~$p$ can be written as $t=c'\cdot c\cdot t'$ in such a way that $|c|\geq 1$, $|c\cdot t'|\leq p$, and $c'\cdot c^n\cdot t'\in L$, for every $n\in\NN$.
\end{lemma}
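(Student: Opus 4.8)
The plan is to run a deterministic bottom-up (frontier-to-root) tree automaton on~$t$ and to locate, near the bottom of a longest root-to-leaf path, two nodes that the automaton decorates with the same state; cutting~$t$ at those two nodes yields the decomposition. Since~$L$ is regular there is such an automaton~$\mathcal{A}$, with some finite state set~$Q$; write $m=|Q|$ and let~$r$ be the maximum rank of a symbol of~$\Sigma$. We may assume $r\geq 1$, since otherwise $\TREES{\Sigma}$ contains only single-node trees and the claim is vacuous. Running~$\mathcal{A}$ on a tree~$s$ assigns to every node~$v$ of~$s$ the state $q_s(v)\in Q$ reached there; because~$\mathcal{A}$ is bottom-up and deterministic, $q_s(v)$ depends only on the subtree of~$s$ rooted at~$v$, and $s\in L$ iff $q_s(\mathit{root})$ is final. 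A tree of height at most~$h$ has at most $(h{+}1)r^{h}$ nodes, so I would set $p:=(m{+}1)r^{m}$; then every $t\in L$ with $|t|\geq p$ has height at least~$m$.

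First I would fix such a~$t$ together with a longest path $u_0,u_1,\dots,u_h$ in~$t$, where $u_0$ is the root and $h=\mathrm{height}(t)\geq m$. Because this path is longest, the subtree rooted at $u_{h-m}$ has height at most~$m$ (a longer path below $u_{h-m}$ could be prepended with $u_0,\dots,u_{h-m}$ to beat $u_0,\dots,u_h$), hence has at most $(m{+}1)r^{m}=p$ nodes. Among the $m{+}1$ nodes $u_{h-m},u_{h-m+1},\dots,u_h$, the pigeonhole principle gives two, say $u_a$ and $u_b$ with $h-m\leq a<b\leq h$, such that $q_t(u_a)=q_t(u_b)$. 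Now let $t'$ be the subtree of~$t$ rooted at~$u_b$, let~$c$ be the context obtained from the subtree rooted at~$u_a$ by replacing~$t'$ with~${\circ}$, and let~$c'$ be the context obtained from~$t$ by replacing the subtree rooted at~$u_a$ with~${\circ}$. Then $t=c'\cdot c\cdot t'$; moreover $|c|\geq 1$ because $u_b$ is a proper descendant of~$u_a$, and $|c\cdot t'|$ is the size of the subtree rooted at~$u_a$, which is bounded by the size of the subtree rooted at~$u_{h-m}$ and hence by~$p$.

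It remains to verify that $c'\cdot c^{n}\cdot t'\in L$ for every $n\in\NN$, which I would do by tracking states. The point is that a bottom-up automaton's behaviour above a subtree depends only on the state reached at that subtree's root. Since $q_{t'}(\mathit{root})=q_t(u_b)=q_t(u_a)$, running~$\mathcal{A}$ on $c\cdot t'$ reproduces, from $u_b$ up to $u_a$, the original run, so $q_{c\cdot t'}(\mathit{root})=q_t(u_a)$; iterating gives $q_{c^{n}\cdot t'}(\mathit{root})=q_t(u_a)$ for all $n\geq 0$ (the case $n=0$ being the observation above). Feeding this state into~$c'$ reproduces the original run of~$\mathcal{A}$ on $t=c'\cdot(c^{0}\cdot t')$ from~$u_a$ up to the root, so $q_{c'\cdot c^{n}\cdot t'}(\mathit{root})$ is final, i.e.\ $c'\cdot c^{n}\cdot t'\in L$.

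The only genuinely delicate point is the size bound $|c\cdot t'|\leq p$. A naive pigeonhole argument along a longest path does produce a repeated state, but the upper occurrence may sit far from the leaves, with a large subtree hanging below it, which would violate the bound. Restricting the pigeonhole to the bottommost $m{+}1$ nodes of a longest path is exactly what forces the relevant subtree—and hence $c\cdot t'$—to have height at most~$m$ and thus size at most~$p$; this is the step I would take most care over, and it is the analogue, in the size-based formulation, of choosing a minimal subtree in the height-based proof.
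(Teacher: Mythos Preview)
Your proof is correct; the only slip is the equation $t=c'\cdot(c^{0}\cdot t')$ in the last paragraph, which should read $t=c'\cdot(c\cdot t')$---the argument itself is unaffected, since all you need is that the run of~$\mathcal{A}$ on~$c'$ above its hole depends only on the state fed in.

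The paper, however, does not prove this lemma directly: it quotes it as the standard pumping lemma and recovers it as the all-nodes-marked special case of Lemma~\ref{lem:Ogden}, which in turn rests on the technical Lemma~\ref{lem:Technical}. There your longest-path-plus-pigeonhole step is replaced by a count of \emph{interesting} nodes (marked nodes, or nodes with at least two children on paths to interesting nodes), and the pumping constant comes out as $p=\sum_{i=0}^{|Q|} r^{i}$ rather than your $(|Q|+1)\,r^{|Q|}$. Your route is the classical elementary one and gives a clean self-contained proof of the size-based statement; the paper's route is deliberately less direct because the interesting-node machinery is exactly what carries over to arbitrary markings and yields the Ogden-style strengthening, which a bare longest-path argument does not give without further work.
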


\noindent Just as the pumping lemma for context-free string languages, Lemma~\ref{lem:PumpingLemma} is most often used in its contrapositive formulation, which specifies a strategy for proofs that a language $L\subseteq\TREES{\Sigma}$ is \emph{not} regular: show that, for all $p\geq 1$, there exists a tree $t\in L$ of size at least~$p$ such that for any decomposition $c'\cdot c\cdot t'$ of~$t$ in which $|c|\geq 1$ and $|c\cdot t'|\leq p$, there is a number $n\in\NN$ such that $c'\cdot c^n\cdot t'\notin L$. It is helpful to think of a proof according to this strategy as a game against an imagined \ADVERSARY, where our objective is to prove that~$L$ is non-regular, and \ADVERSARY's objective is to foil this proof. The game consists of four alternating turns: In the first turn, \ADVERSARY\ must choose a number $p\geq 1$. In the second turn, we must respond to this choice by providing a tree $t\in L$ of size at least~$p$. In the third turn, \ADVERSARY\ must choose a decomposition of~$t$ into fragments $c'\cdot c\cdot t'$ such that $|c|\geq 1$ and $|c\cdot t'|\leq p$. In the fourth and final turn, we must provide a number $n\in\NN$ such that $c'\cdot c^n\cdot t'\notin L$. If we are able to do so, we win the game; otherwise, \ADVERSARY\ wins. We can prove that~$L$ is non-regular, if we have a winning strategy for the game.

\begin{figure}
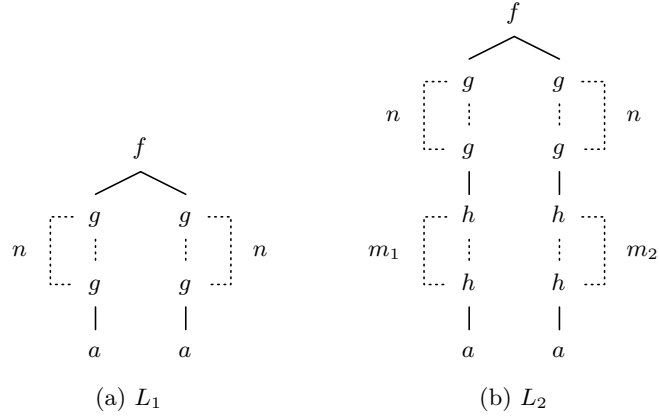

	\bigskip\centering
	\subfloat[$L_1$]{%
		\label{fig:PumpingLemmaExample1}%
		\begin{FIGURE}{PumpingLemmaExample1}
			\LABEL{5}{09}{$f$}
			\LABEL{3}{06}{$g$}
			\LABEL{3}{03}{$g$}
			\LABEL{3}{00}{$a$}
			\LABEL{7}{06}{$g$}
			\LABEL{7}{03}{$g$}
			\LABEL{7}{00}{$a$}
			\LABEL{00}{04.5}{\llap{$n$}}
			\LABEL{10}{04.5}{\rlap{$n$}}
		\end{FIGURE}}
	\qquad\qquad
	\subfloat[$L_2$]{%
		\label{fig:PumpingLemmaExample2}%
		\begin{FIGURE}{PumpingLemmaExample2}
			\LABEL{5}{15}{$f$}
			\LABEL{3}{12}{$g$}
			\LABEL{3}{09}{$g$}
			\LABEL{3}{06}{$h$}
			\LABEL{3}{03}{$h$}
			\LABEL{3}{00}{$a$}
			\LABEL{7}{12}{$g$}
			\LABEL{7}{09}{$g$}
			\LABEL{7}{06}{$h$}
			\LABEL{7}{03}{$h$}
			\LABEL{7}{00}{$a$}
			\LABEL{00}{10.5}{\llap{$n$}}
			\LABEL{10}{10.5}{\rlap{$n$}}
			\LABEL{00}{04.5}{\llap{$m_1$}}
			\LABEL{10}{04.5}{\rlap{$m_2$}}
		\end{FIGURE}}
	\caption{Two tree languages that are not regular}
\end{figure}

Consider the language $L_1 = \SETC{f(g^n\cdot a,g^n\cdot a)}{n\geq 1}$, shown schematically in Figure~\ref{fig:PumpingLemmaExample1}. Using Lemma~\ref{lem:PumpingLemma}, it is easy to show that this language is non-regular: we always win by presenting \ADVERSARY\ with the tree $t = f(g^p\cdot a,g^p\cdot a)$. To see this, notice that in whatever way \ADVERSARY\ decomposes~$t$ into fragments $c'\cdot c\cdot t'$ such that $|c|\geq 1$, the pumped tree $c'\cdot c^2\cdot t'$ does not belong to~$L_1$. In particular, if~$c$ is rooted at a node that is labelled with~$g$, then the pumped tree violates the constraint that the two branches have the same length.

Unfortunately, Lemma~\ref{lem:PumpingLemma} sometimes is too blunt a tool to show the non-regularity of a tree language. Consider the language
\begin{displaymath}
    L_2 \PAD{=}
    \SETC{f(g^n\cdot h^{m_1}\cdot a,g^n\cdot h^{m_2}\cdot a)}{n,m_1,m_2\geq 1}
\end{displaymath}
(see Figure~\ref{fig:PumpingLemmaExample2}). It is not unreasonable to believe that~$L_2$, like~$L_1$, is non-regular, but it is impossible to prove this using Lemma~\ref{lem:PumpingLemma}. To see this, notice that \ADVERSARY\ has a winning strategy for $p\geq 2$: for every tree $t\in L_2$ that we can provide in the second turn of the game, \ADVERSARY\ can choose any decomposition $c'\cdot c\cdot t'$ in which $c=h({\circ})$ and $t'=a$. In this case, $|c|\geq 1$, $|c\cdot t'|\leq p$, and both deleting and pumping~$c$ yield only valid trees in~$L_2$. Intuitively, we would like to force \ADVERSARY\ to choose a decomposition that contains a $g$-labelled node, thus transferring our winning strategy for~$L_1$---but this is not warranted by Lemma~\ref{lem:PumpingLemma}, which merely asserts that a pumpable context does exist \emph{somewhere} in the tree, but does not allow us to delimit the exact region. The pumping lemma that we prove in this paper makes a stronger assertion:

\begin{lemma}\label{lem:Ogden}%
    For every regular tree language $L\subseteq\TREES{\Sigma}$, there is a number $p\geq 1$ such that every tree $t\in L$ in which at least~$p$ nodes are marked as distinguished can be written as $t = c'\cdot c\cdot t'$ such that at least one node in~$c$ is marked, at most~$p$ nodes in $c\cdot t'$ are marked, and $c'\cdot c^n\cdot t'\in L$, for all $n\in\NN$.
\end{lemma}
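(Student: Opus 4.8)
The plan is to revisit the classical proof of the pumping lemma for regular tree languages and make it ``Ogden-aware'', exactly as Ogden's argument refines the proof of the context-free pumping lemma. Instead of following an arbitrary long path in an accepting computation, I would follow a path that is \emph{forced} to meet many \emph{branch nodes} relative to the set of distinguished nodes, and then pump between two branch nodes that carry the same state. The marking condition of Lemma~\ref{lem:Ogden} then comes for free from the fact that the lower pump point is a branch node.

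Concretely, fix a deterministic bottom-up (frontier-to-root) finite-state tree automaton~$\mathcal{A}$ with state set~$Q$ recognising~$L$, and let~$d$ be the maximal rank of a symbol of~$\Sigma$. I would set the pumping constant to $p = 2 + d + d^2 + \cdots + d^{|Q|+1}$, write $\bar\delta(s)$ for the state that $\mathcal{A}$ assigns to the root of a tree~$s$, and, for a context~$c$ and a state~$q$, write $\bar\delta(c,q)$ for the state $\mathcal{A}$ assigns to the root of~$c$ when a tree of state~$q$ is substituted at the hole (well defined by determinism). Given $t\in L$ with a set~$D$ of at least~$p$ distinguished nodes, and writing $m(v)$ for the number of nodes of~$D$ in the subtree of~$t$ rooted at~$v$, I would build a path $v_0,\dots,v_k$ from the root down to a leaf by always stepping from~$v_i$ to the child of~$v_i$ whose subtree contains the most nodes of~$D$, ties broken arbitrarily; then $m(v_0)\geq p$ and $m(v_0)\geq m(v_1)\geq\cdots\geq m(v_k)$. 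Call an internal node~$v_i$ on this path a \emph{branch node} if $v_i\in D$ or at least two children of~$v_i$ have nodes of~$D$ in their subtrees; this happens exactly when $m(v_i)>m(v_{i+1})$.

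The heart of the argument --- and the step I expect to cost the most care --- is a counting estimate guaranteeing that a tree with many distinguished nodes yields a path with many branch nodes. Since the child continuing the path carries the largest share of the distinguished nodes among the at most~$d$ children of~$v_i$, one has $m(v_i)\leq 1 + d\cdot m(v_{i+1})$ at every internal~$v_i$, while $m$ is constant along the path between consecutive branch nodes and $m(v_k)\leq 1$. Listing the branch nodes in order of increasing depth as $v_{i_1},\dots,v_{i_\ell}$, a bottom-up induction then yields $m(v_{i_{\ell-j}})\leq 1 + d + \cdots + d^{j+1}$, hence $p\leq m(v_0)=m(v_{i_1})\leq 1 + d + \cdots + d^{\ell}$; by the choice of~$p$ this forces $\ell\geq|Q|+1$. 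I may therefore look at the $|Q|+1$ deepest branch nodes, run $\mathcal{A}$ on~$t$, and apply the pigeonhole principle to the states it assigns there: two of those nodes, say $v_{i_a}$ strictly above~$v_{i_b}$, receive the same state~$q$.

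It then remains to read off the decomposition and verify the three requirements of Lemma~\ref{lem:Ogden}. Take $t'$ to be the subtree rooted at~$v_{i_b}$, let~$c$ be the subtree rooted at~$v_{i_a}$ with that occurrence of~$t'$ replaced by the hole, and let~$c'$ be~$t$ with the subtree at~$v_{i_a}$ replaced by the hole, so $t=c'\cdot c\cdot t'$. Since $\bar\delta(t')=q$ and $\bar\delta(c,q)=q$, a trivial induction gives $\bar\delta(c^n,q)=q$, whence $\bar\delta(c'\cdot c^n\cdot t')=\bar\delta(t)$ is accepting and $c'\cdot c^n\cdot t'\in L$ for every $n\in\NN$; this part is verbatim the classical pumping argument. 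The distinguished nodes of $c\cdot t'$ are precisely those of the subtree at~$v_{i_a}$, and as $v_{i_a}$ is among the $|Q|+1$ deepest branch nodes the counting bound yields $m(v_{i_a})\leq 1+d+\cdots+d^{|Q|+1} < p$, so at most~$p$ of them are marked. Finally, $c$ contains a distinguished node: if $v_{i_a}\in D$ this is immediate since $v_{i_a}$ is the root of~$c$, and otherwise $v_{i_a}$ has, besides the child heading towards the hole, a further child whose whole subtree lies inside~$c$ and contains a node of~$D$. Everything outside the counting estimate of the third paragraph is routine, with branch nodes playing the role that repeated states play in the classical proof, so I would concentrate the write-up on making that estimate airtight (and on checking the few degenerate cases, such as $d\leq 1$, where the statement is either classical or vacuous).
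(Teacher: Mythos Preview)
Your proposal is correct and follows essentially the same Ogden-style argument as the paper: your ``branch nodes'' coincide with the paper's ``interesting'' nodes, and both proofs locate a root-to-leaf path carrying at least $|Q|+1$ of them via a geometric bound in the maximal rank, then apply the pigeonhole principle to the states assigned at those nodes. The only structural difference is that the paper factors the combinatorics out into a separate, automaton-free statement (Lemma~\ref{lem:Technical}, for arbitrary $k$) and selects a path with the \emph{maximal} number of interesting nodes rather than your greedy most-marked-descendants path; this modularity is what lets the paper read off the stronger Lemma~\ref{lem:OgdenStrong} immediately, whereas your direct argument would need to be re-run with $m\cdot|Q|+1$ branch nodes to get there.
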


\noindent Note that, in the special case where all nodes are marked, Lemma~\ref{lem:Ogden} reduces to Lemma~\ref{lem:PumpingLemma}.

Lemma~\ref{lem:Ogden} can be seen as the natural correspondent of Ogden's Lemma for context-free string languages \citep{ogden1968helpful}. Its contrapositive corresponds to the following modified game for tree languages~$L$: In the first turn, \ADVERSARY\ has to choose a number $p\geq 1$. In the second turn, we have to choose a tree $t\in L$ and mark at least~$p$ nodes in~$t$. In the third turn, \ADVERSARY\ has to choose a decomposition $c'\cdot c\cdot t'$ of~$t$ \emph{in such a way that at least one node in~$c$ and at most~$p$ nodes in $c\cdot t'$ are marked}. In the fourth and final turn, we have to choose a number $n\in\NN$ such that $c'\cdot c^n\cdot t'\notin L$. In this modified game, we can implement our idea from above to prove that the language~$L_2$ is non-regular: we can always win the game by presenting \ADVERSARY\ with the tree $t = f(g^p\cdot h(a),g^p\cdot h(a))$ and marking all nodes that are labelled with~$g$ as distinguished. Then, in whatever way \ADVERSARY\ decomposes~$t$ into segments $c'\cdot c\cdot t'$, the context~$c$ contains at least one node labelled with~$g$, and the tree $c'\cdot c^2\cdot t'$ does not belong to~$L_2$.

\section{Proof}%
\label{sec:Proofs}

Our proof of Lemma~\ref{lem:Ogden} builds on the following technical lemma:

\begin{lemma}\label{lem:Technical}%
    Let~$\Sigma$ be a ranked alphabet. For every tree language $L\subseteq\TREES{\Sigma}$ and every $k\geq 1$, there exists a number $p\geq 1$ such that every tree $t\in L$ in which at least~$p$ nodes have been marked as distinguished can be written as $t = c'\cdot c_1\cdots c_k\cdot t'$ in such a way that for each $i\in [k]$, the context~$c_i$ contains at least one marked node, and the tree $c_1\cdots c_k\cdot t'$ contains at most~$p$ marked nodes.
\end{lemma}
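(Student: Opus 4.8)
The plan is to prove Lemma~\ref{lem:Technical} as a purely combinatorial statement about $\Sigma$-trees: regularity of~$L$ plays no role, and the hypothesis $t\in L$ is never used. Let $r$ be the maximal rank of a symbol in~$\Sigma$. If $r=0$ every tree consists of a single node and cannot carry two marked nodes, so $p=2$ works vacuously; assume $r\geq 1$ and set $p=(r+1)^{k+1}$. Given a tree $t$ with at least $p$ marked nodes, I would first single out a distinguished path through~$t$: starting at the root, repeatedly descend into the child whose subtree contains the largest number of marked nodes (breaking ties arbitrarily), until a leaf is reached. Write $v_0,\dots,v_\ell$ for the nodes of this path and $m_j$ for the number of marked nodes in the subtree rooted at~$v_j$. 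By construction the sequence $(m_j)$ is weakly decreasing, $m_0\geq p$, $m_\ell\leq 1$, and — since at $v_j$ the path enters the child with the most marked nodes and $v_j$ has at most $r$ children — we have $m_j\leq 1+r\,m_{j+1}$ for every $j<\ell$.

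The heart of the argument is a counting step on the list $d_0>d_1>\dots>d_s$ of the distinct values taken by $(m_j)$. From $m_j\leq 1+r\,m_{j+1}$ I would extract two facts: first, whenever $d_{i+1}\geq 1$ one has $d_i\leq (r+1)\,d_{i+1}$; second, the smallest positive value in the list is exactly~$1$, because the step immediately before the first occurrence of value~$0$ (if any) drops from a value $\leq r\cdot 0+1=1$, while otherwise $m_\ell=1$. Let $i^{*}$ be the least index with $d_{i^{*}}\leq p$, and let $s'$ be the index of the smallest positive value, so $d_{s'}=1$. Iterating the first fact gives $d_{i^{*}}\leq (r+1)^{\,s'-i^{*}}$, and on the other hand $d_{i^{*}}\geq p/(r+1)$: if $i^{*}=0$ then $d_{i^{*}}=m_0=p$, and if $i^{*}\geq 1$ then $d_{i^{*}-1}\geq p+1$ forces $d_{i^{*}}\geq p/r$. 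Combining these, $(r+1)^{\,s'-i^{*}}\geq p/(r+1)$, hence $s'-i^{*}\geq \log_{r+1}p-1=k$, so there are at least $k+1$ distinct values among $d_{i^{*}},d_{i^{*}+1},\dots$ that lie in the range $[1,p]$. I expect this counting step — showing that the marked count can only shrink by a bounded multiplicative factor per branching, and that it must descend all the way to~$1$ — to be the main obstacle; the rest is bookkeeping.

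To finish, fix the top $k+1$ of those values, $d_{i^{*}}>d_{i^{*}+1}>\dots>d_{i^{*}+k}$, and for each $i\in\{0,\dots,k\}$ let $u_i$ be the highest node on the path whose subtree contains exactly $d_{i^{*}+i}$ marked nodes. Since $(m_j)$ is weakly decreasing, $u_0,u_1,\dots,u_k$ occur in this order from the root downwards, with strictly decreasing marked counts. Let $t'$ be the subtree at $u_k$; let $c_i$ be the context obtained from the subtree at $u_{i-1}$ by replacing the subtree at $u_i$ with the hole, for $i\in[k]$; and let $c'$ be the context obtained from $t$ by replacing the subtree at $u_0$ with the hole. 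Then $t=c'\cdot c_1\cdots c_k\cdot t'$. Because the subtree at $u_{i-1}$ is $c_i$ applied to the subtree at $u_i$, and the former contains strictly more marked nodes than the latter, each $c_i$ contains at least one marked node. Finally, $c_1\cdots c_k\cdot t'$ is precisely the subtree rooted at $u_0$, which by the choice of $i^{*}$ contains $d_{i^{*}}\leq p$ marked nodes, as required.
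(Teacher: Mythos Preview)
Your argument is correct. You never use that $L$ is a tree language in any essential way, which matches the paper; you handle the rank-zero case as the paper does; your heavy-path inequality $m_j\le 1+r\,m_{j+1}$, the observation that the value~$1$ is always attained, and the lower bound $d_{i^*}\ge p/(r+1)$ are all sound; and your final decomposition into contexts at the nodes $u_0,\dots,u_k$ yields exactly the required properties.

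The route, however, is genuinely different from the paper's. The paper does not follow a greedy heavy path. Instead it introduces the auxiliary notion of an \emph{interesting} node (a node that is marked, or has at least two children from which an interesting node is reachable), shows that the number of interesting nodes at ``interesting-depth'' at most $k-1$ is bounded by $g_\Sigma(k-1)=\sum_{i=0}^{k-1} r^i$, and concludes that with $p=g_\Sigma(k)$ marked nodes there must be a root-to-leaf path carrying $k+1$ interesting nodes; these become the cut points $v_1,\dots,v_{k+1}$. The fact that each resulting context contains a marked node is then argued via the branching clause in the definition of ``interesting''. By contrast, you get marked nodes inside each $c_i$ simply from the strict drop $d_{i^*+i-1}>d_{i^*+i}$ of marked counts along your path. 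Your argument is arguably more elementary (no auxiliary inductive notion, just a counting inequality on a monotone sequence), while the paper's argument yields a tighter constant: $p=\sum_{i=0}^{k} r^i$ rather than your $p=(r+1)^{k+1}$.
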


\begin{proof}
    Let~$m$ be the maximal rank of any symbol in~$\Sigma$. Note that if~$m$ is zero, then each tree over~$\Sigma$ has size one, and the lemma trivially holds with $p=2$. For the remainder of the proof, assume that $m\geq 1$. Put $g_\Sigma(n)=\sum_{i=0}^n m^i$, and note that $g_\Sigma(n)<g_\Sigma(n+1)$, for all $n\in\NN$. We will show that we can choose $p=g_\Sigma(k)$.
    
    Let $t\in L$ be a tree in which at least one node has been marked as distinguished. We call a node~$u$ of~$t$ \emph{interesting}, if it either is marked, or has at least two children from which there is a path to an interesting node. It is easy to see from this definition that from every interesting node, there is a path to a marked node. Let $d(u)$ denote the number of interesting nodes on the path from the root node of~$t$ to~$u$, excluding~$u$ itself. We make two observations about the function $d(u)$:
    
    First, there is exactly one interesting node~$u$ with $d(u)=0$. To see that there is at most one such node, let~$u_1$ and~$u_2$ be distinct interesting nodes with $d(u_1)=d(u_2)=n$; then the least common ancestor~$u$ of~$u_1$ and~$u_2$ is an interesting node with $d(u)=n-1$. To see that there is at least one such node, recall that every marked node is interesting.
    
    For the second observation, let~$u$ be an interesting node with $d(u)=n$. The number of interesting descendants~$v$ of~$u$ with $d(v)=n+1$ is at most~$m$. To see this, notice that each path from~$u$ to~$v$ starts with~$u$, continues with some child~$u'$ of~$u$, and then visits only non-interesting nodes~$w$ until reaching~$v$. From each of these non-interesting nodes~$w$, there is at most one path that leads to~$v$. Therefore, the path from~$u$ to~$v$ is uniquely determined except for the choice of the child~$u'$, which is a choice among at most~$m$ alternatives.
    
    Taken together, these observations imply that the number of interesting nodes~$u$ with $d(u)\leq k-1$ is bounded by the value $g_\Sigma(k-1)$.
    
    Now, let~$t$ be a tree in which at least $g_\Sigma(k)$ nodes have been marked as distinguished. Then there is at least one interesting node~$u$ with $d(u)=k$, and hence, at least one path that visits at least $k+1$ interesting nodes. Choose any path that visits the maximal number of interesting nodes, and let~$\vec{u}$ be a suffix of that path that visits exactly $k+1$ interesting nodes, call them $v_1,\dots,v_{k+1}$. We use~$\vec{u}$ to identify a decomposition $c_1\cdots c_k\cdot t'$ of~$t$ as follows: for each $i\in [k]$, choose~$v_i$ as the root node of~$c_i$, choose $v_{i+1}$ as the hole of $c_i$, and choose $v_{k+1}$ as the root node of~$t'$. This decomposition satisfies the required properties: To see that the tree $c_1\cdots c_k\cdot t'$ contains at most~$p$ marked nodes, notice that, by the choice of~$\vec{u}$, no path in~$t$ that starts at~$v_1$ contains more than $k+1$ interesting nodes, and hence the total number of interesting nodes in the subtree rooted at~$v_1$ is bounded by $g_\Sigma(k)=p$. To see that every context~$c_i$, $i\in [k]$, contains at least one marked node, let~$v$ be one of the interesting nodes in~$c_i$, and assume that~$v$ is not itself marked. Then~$v$ has at least two children from which there is a path to an interesting, and, ultimatively, to a marked node. At most one of these paths visits~$v_{i+1}$; the marked node at the end of the other path is a node of~$c_i$.
\end{proof}

\noindent With Lemma~\ref{lem:Technical} at hand, the proof of Lemma~\ref{lem:Ogden} is straightforward, and essentially identical to the proof given for the standard pumping lemma \citep{gecseg1997tree}:

\begin{proof}[of Lemma~\ref{lem:Ogden}]%
    Let $L\subseteq\TREES{\Sigma}$ be a regular tree language, and let~$M$ be a tree automaton with state set~$Q$ that recognizes~$L$. We will apply Lemma~\ref{lem:Technical} with $k=|Q|$. Let $t\in L$ be a tree in which at least~$p$ nodes are marked as distinguished, where~$p$ is the number from Lemma~\ref{lem:Technical}. Then~$t$ can be written as $c'\cdot c_1\cdots c_k\cdot t'$ such that for each index $i\in [k]$, the context~$c_i$ contains at least one marked node, and the tree $c_1\cdots c_k\cdot t'$ contains at most~$p$ marked nodes. Note that each context~$c_i$, $i\in [k]$, is necessarily non-empty. Since~$M$ has only~$k$ states, it must arrive in the same state at the root nodes of at least two contexts $c_i$, $i\in [k]$, or at the root node of some context $c_i$, $i\in [k]$, and the root node of~$t'$. A decomposition of~$t$ of the required kind is then obtained by cutting~$t$ at these two nodes.
\end{proof}

\noindent Note that by choosing $k=m\cdot |Q|$ in this proof, where $m\geq 1$, it is easy to generalize Lemma~\ref{lem:Ogden} as follows:
\begin{lemma}\label{lem:OgdenStrong}%
    For every regular tree language $L\subseteq\TREES{\Sigma}$ and every $m\geq 1$, there is a number $p\geq 1$ such that every tree $t\in L$ in which at least~$p$ nodes are marked as distinguished can be written as $t = c'\cdot c_1\cdots c_m\cdot t'$ such that for each $i\in [m]$, at least one node in~$c_i$ is marked, at most~$p$ nodes in $c_1\cdots c_m\cdot t'$ are marked, and $c'\cdot c_1^n\cdots c_m^n\cdot t'\in L$, for all $n\in\NN$.
\end{lemma}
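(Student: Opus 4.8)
The plan is to follow the proof of Lemma~\ref{lem:Ogden}, but to apply Lemma~\ref{lem:Technical} with the larger parameter $k=m\cdot|Q|$ and then extract $m$ nested pumpable contexts rather than one. So let $M$ be a tree automaton with state set $Q$ recognizing $L$, fix $m\geq 1$, and let $p$ be the number that Lemma~\ref{lem:Technical} supplies for $k=m\cdot|Q|$. Given a tree $t\in L$ with at least $p$ marked nodes, Lemma~\ref{lem:Technical} yields a decomposition $t=c'\cdot c_1\cdots c_k\cdot t'$ in which each $c_i$, $i\in[k]$, carries a marked node---so in particular each $c_i$ is non-empty---and $c_1\cdots c_k\cdot t'$ has at most $p$ marked nodes. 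Write $v_1,\dots,v_{k+1}$ for the nodes at which this decomposition cuts $t$, listed from top to bottom along their common path: $v_i$ is the root node of $c_i$ for $i\in[k]$, and $v_{k+1}$ is the root node of $t'$.

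Next I would run the pigeonhole argument on the state that $M$ reaches at each $v_j$. There are $k+1=m|Q|+1$ such nodes and only $|Q|$ states, so some state $q$ occurs at at least $m+1$ of them; pick $m+1$ of these, say $v_{j_0},\dots,v_{j_m}$ with $j_0<\dots<j_m$. Then regroup the decomposition: let the new topmost context be $c'\cdot c_1\cdots c_{j_0-1}$, let $\hat c_\ell=c_{j_{\ell-1}}\cdots c_{j_\ell-1}$ for $\ell\in[m]$, and let the new bottom tree be $c_{j_m}\cdots c_k\cdot t'$, with the usual convention that empty products of contexts are dropped (so the bottom tree is just $t'$ when $j_m=k+1$). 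Each $\hat c_\ell$ is non-empty and contains the original context $c_{j_{\ell-1}}$---note that $j_{\ell-1}\leq k$, since $j_{\ell-1}<j_\ell\leq k+1$---and hence contains a marked node. Moreover $\hat c_1\cdots\hat c_m\cdot t'$ equals the subtree of $t$ rooted at $v_{j_0}$, which is a subtree of the one rooted at $v_1$, namely of $c_1\cdots c_k\cdot t'$; so it contains at most $p$ marked nodes.

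It remains to check simultaneous pumpability. Each $\hat c_\ell$ has root node $v_{j_{\ell-1}}$ and hole $v_{j_\ell}$, and $M$ reaches state $q$ at both; hence, by the standard loop argument for tree automata, on any input $\hat c_\ell^{\,n}\cdot s$ for which $M$ reaches $q$ at the root of $s$, it again reaches $q$ at the root. Since $M$ reaches $q$ at $v_{j_m}$, the root of the new bottom tree, a straightforward induction on $\ell$ (working up the path) shows that $M$ reaches $q$ at $v_{j_0}$ on input $\hat c_1^{\,n}\cdots\hat c_m^{\,n}\cdot c_{j_m}\cdots c_k\cdot t'$, for every $n\in\NN$---the same state it reaches there when $n=1$. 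Feeding this subtree into the unchanged context $c'\cdot c_1\cdots c_{j_0-1}$ therefore reproduces, for every $n$, the accepting state that $M$ assigns to the root of $t$, so $c'\cdot c_1\cdots c_{j_0-1}\cdot\hat c_1^{\,n}\cdots\hat c_m^{\,n}\cdot c_{j_m}\cdots c_k\cdot t'\in L$. Renaming $c'\cdot c_1\cdots c_{j_0-1}$ to $c'$, each $\hat c_\ell$ to $c_\ell$, and $c_{j_m}\cdots c_k\cdot t'$ to $t'$ gives precisely the statement of the lemma.

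I do not expect a genuine obstacle here: the argument is a rerun of the proof of Lemma~\ref{lem:Ogden} with one extra pigeonhole. The only points needing care are the index bookkeeping---checking that every regrouped context $\hat c_\ell$ is non-empty and still inherits a marked node, and that the bound ``at most $p$ marked nodes'' survives the regrouping---and the routine observation that, because $M$ is a bottom-up automaton, the state at each cut node is well defined and the simultaneous-pumping induction is immediate (for a top-down automaton one would phrase the same computation dually).
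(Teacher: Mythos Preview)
Your argument is correct and is exactly the expansion the paper has in mind: it simply records that choosing $k=m\cdot|Q|$ when invoking Lemma~\ref{lem:Technical} suffices, and you have spelled out the pigeonhole step (finding $m+1$ cut nodes carrying the same state among the $m|Q|+1$ available) together with the routine regrouping and pumping verification. The index bookkeeping and the bound on marked nodes are handled correctly, so nothing further is needed.
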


\bibliographystyle{plainnat}
\bibliography{mcqm}

\begin{thebibliography}{3}
\providecommand{\natexlab}[1]{#1}
\providecommand{\url}[1]{\texttt{#1}}
\expandafter\ifx\csname urlstyle\endcsname\relax
  \providecommand{\doi}[1]{doi: #1}\else
  \providecommand{\doi}{doi: \begingroup \urlstyle{rm}\Url}\fi

\bibitem[Engelfriet and Maneth(2002)]{engelfriet2002output}
Joost Engelfriet and Sebastian Maneth.
\newblock Output string languages of compositions of deterministic macro tree
  transducers.
\newblock \emph{Journal of Computer and System Sciences}, 64\penalty0
  (2):\penalty0 350--395, 2002.
\newblock \doi{10.1006/jcss.2001.1816}.

\bibitem[G{\'e}cseg and Steinby(1997)]{gecseg1997tree}
Ferenc G{\'e}cseg and Magnus Steinby.
\newblock Tree languages.
\newblock In Grzegorz Rozenberg and Arto Salomaa, editors, \emph{Handbook of
  Formal Languages}, volume~3, pages 1--68. Springer, 1997.

\bibitem[Ogden(1968)]{ogden1968helpful}
William Ogden.
\newblock A helpful result for proving inherent ambiguity.
\newblock \emph{Mathematical Systems Theory}, 2\penalty0 (3):\penalty0
  191--194, 1968.
\newblock \doi{10.1007/BF01694004}.

\end{thebibliography}

\end{document}